\newtheorem{theorem}{Theorem}[section]
\newtheorem{lemma}[theorem]{Lemma}
\newtheorem{proposition}[theorem]{Proposition}
\newdefinition{definition}[theorem]{Definition}
\newdefinition{remark}[theorem]{Remark}
\newdefinition{example}[theorem]{Example}
\begin{document}

\begin{frontmatter}

%% Title, authors and addresses

%% use the tnoteref command within \title for footnotes;
%% use the tnotetext command for the associated footnote;
%% use the fnref command within \author or \address for footnotes;
%% use the fntext command for the associated footnote;
%% use the corref command within \author for corresponding author footnotes;
%% use the cortext command for the associated footnote;
%% use the ead command for the email address,
%% and the form \ead[url] for the home page:
%%
%% \title{Title\tnoteref{label1}}
%% \tnotetext[label1]{}
%% \author{Name\corref{cor1}\fnref{label2}}
%% \ead{email address}
%% \ead[url]{home page}
%% \fntext[label2]{}
%% \cortext[cor1]{}
%% \address{Address\fnref{label3}}
%% \fntext[label3]{}

\title{Deterministic construction of sparse binary and ternary matrices from existing binary sensing matrices}

%% use optional labels to link authors explicitly to addresses:
%% \author[label1,label2]{<author name>}
%% \address[label1]{<address>}
%% \address[label2]{<address>}

%\author{Pradip Sasmal\footnote{author for correspondence Telephone: 091-40-23016072, Fax: 091-40-23016032 } R. Ramu Naidu, and C. S. Sastry \\ Department of Mathematics \\ Indian Institute of Technology, Hyderabad-502205, India \\
%Email: \{ma12p005, ma11p003, csastry\} @ iith.ac.in \\ P. V. Jampana \\ Department of Chemical Engineering\\ Indian Institute of Technology, Hyderabad-502205, India \\
%Email: pjampana@ iith.ac.in}

\author{Pradip Sasmal\footnote{Author for correspondence Telephone: 091-40-23016072, Fax: 091-40-23016032 }, R. Ramu Naidu and C. S. Sastry}
\address{Department of Mathematics \\ Indian Institute of Technology, Hyderabad-502205, India \\
Email: \{ma12p1005, ma11p003, csastry\} @ iith.ac.in}

\author{P. V. Jampana}
\address{Department of Chemical Engineering\\ Indian Institute of Technology, Hyderabad-502205, India \\
Email: pjampana@ iith.ac.in}
\begin{abstract}
In the present work, we discuss a procedure for constructing sparse binary and ternary matrices from existing two binary sensing matrices. The matrices that we construct have several attractive properties such as smaller density, which supports algorithms with low computational complexity. As an application of our method, we show that a CS matrix of general row size different from $p, p^2, pq$ (for different primes $p,q$) can be constructed. 
%In Compressed Sensing the matrices that satisfy the Restricted Isometry Property (RIP) play an important role.
% But  to date, very few results  for designing such matrices are available. Of interest in several applications is a
 %binary sensing matrix. The present work constructs the deterministic and binary sensing matrices using the Euler Squares. In particular, given a positive integer $m$ different from $p, p^2$ for a prime $p$, we show that it is possible to construct a binary sensing matrix  whose row size is $m$ and the column size is a function of $m^2$.
\end{abstract}

\begin{keyword}

Compressed Sensing, RIP, Binary and ternary sensing matrices.
%% keywords here, in the form: keyword \sep keyword

%% MSC codes here, in the form: \MSC code \sep code
%% or \MSC[2008] code \sep code (2000 is the default)

\end{keyword}

\end{frontmatter}

\section{Introduction}

Recent developments at the intersection of algebra and optimization theory, by the name of Compressed Sensing (CS), aim at providing sparse descriptions to linear systems.  These developments are found to have tremendous potential for several applications \cite{bourgain_2011, CS_2013, gil_2010, Chand_2014, Lura_2013}. Sparse representations of a function are a powerful analytic tool in many application areas such as image/signal processing and numerical computation \cite{Bruckstein_2009}, to name a few. The need for the sparse representation arises from the fact that several real life applications demand the representation of data in terms of as few basis (frame) type elements as possible. The elements or the columns of the associated matrix $\Phi$ are called atoms and the matrix so generated by them is called the dictionary. The developments of CS Theory depend typically on sparsity and incoherence \cite{Bruckstein_2009}\cite{Kashin_2007}. Sparsity expresses the idea that the “information rate” of a continuous time data may be much smaller than suggested by its bandwidth or that a discrete-time data depends on a number of degrees of freedom which is comparably much smaller than its (finite) length. On the other hand, incoherence extends the duality between the time and frequency contents of data. 
\par Since the sparsity of the CS matrix is key to minimizing the computational complexity associated with the matrix-vector multiplication, it is desirable that a CS matrix has smaller density. Here, by density, one refers to the ratio of number of nonzero entries to the total number of entries of the matrix. The sparse CS matrix may contribute to fast processing with low computational complexity in Compressed Sensing \cite{gil_2010}.

\par In the recent literature on CS \cite{adcock_2013, bryant_2014, dima_2012, mixon_2012, ind_2008}, deterministic construction of CS matrices has gained momentum.  R. Devore \cite{Ronald_2007} has constructed deterministic binary sensing  matrix of size $p^2 \times p^{r+1}$, where $p$ is prime or prime power. The density of this matrix is $\frac{1}{p}.$ S. Li et. al. \cite{li_2012} have generalized Devore's work, constructing binary sensing matrix of size $|\mathcal{P}|q \times q^{\mathcal{L}(G)}$, where $q$ is any prime power and  $\mathcal{P}$ is the set of all rational points on algebraic curve $\mathcal{X}$ over finite field $\mathbb{F}_q$. The density of this matrix is $\frac{1}{q}$. % P. Indyk \cite{ind_2008} has constructed binary sensing matrices using hash functions and extractor graphs with sizes $r2^{O(\log\log n)^{O(1)}} \times n$, where $r \ll n$.
 A. Amini et. al. \cite{amini_2011} have constructed binary sensing matrices %with sizes $(16^{a}-1) \times \frac{(16^{a}-1)(16^{a}-6)}{5}$ 
using OOC codes. The density of this matrix is $\frac{\lambda}{m}$, where $m$ is row size and $\lambda$ is the number of ones in each column. In all these constructions row sizes of associated matrices are given by some particular family of numbers.

\par CS matrices of general size will not impose any restriction on the data to be dealt with, when used in  applications such as data compression or classification.
The main contribution of the present work may be summarized as follows:
\begin{itemize}
\item Construction of sparse binary and ternary matrices from existing binary matrices.
\item As an application of our construction methodology, we show that, given $m \neq p,p^{2},pq$ (for different primes $p,q$), it is possible to construct sparse binary and ternary matrices of row size $m$.
\end{itemize}
\noindent Consequently, we believe that this work is an attempt towards constructing CS matrices of general size.

%\par In the present work, however, we construct binary and ternary CS matrices of very high sparsity (that is, low density) from existing binary sensing matrices. As an application of proposed method, we show that a CS matrix of general size different from $p, p^2, pq$ (for different primes $p,q$) can be constructed.
\par The paper is organized in several sections. In section 2, we provide basics of CS theory.  While in sections 3 and 4, we discuss in detail the construction of CS matrices of low density and an application of proposed method respectively. In the last two sections, we present respectively the construction of ternary matrices and concluding remarks.

\section{Basics of Compressed Sensing:}

\par The objective of compressed sensing is to recover $x = (x_{1},x_{2}\ldots,x_{M})^{T}\in \mathbb{R}^{M} $ from a few of its linear measurements $y\in \mathbb{R}^{m} $ through a stable and efficient reconstruction process via the concept of sparsity. From the measurement vector $y$ and the sensing mechanism, one  gets a system $y=\Phi x$, where $\Phi$ is an $m \times M \;(m < M)$ measurement matrix. 

Denoting $\Phi^{r}_{k}$ as the $k^{\mbox{th}}$ row of $\Phi$, one may rewite the $k^{\mbox{th}}$ component in $y$ as $y_{k}=\left\langle \Phi^{r}_{k} ,x \right\rangle, k=1,2,\dots m$. Here $\left\langle \Phi^{r}_{k},x\right\rangle$ represents the inner-product between $\Phi^{r}_{k}$ and $x$. That is, the object $x$ to be acquired is correlated with the waveform $\Phi^{r}_{k}$. This is a standard setup in several applications \cite{Donoho_2006}\cite{candes_2008}. For example, if the sensing waveforms are Dirac delta functions, then $y$ is a vector of sampled values of $x$ in time or space domain. If the sensing waveforms are indicator functions of pixels, then $y$ is the image data typically collected by sensors in a digital camera. If the sensing waveforms are sinusoids, then $y$ is a vector of Fourier coefficients and this modality is used in the magnetic resonance imaging. Nevertheless, if the sensing waveforms have $0$ and $1$ (or $0$ and $\pm1$) as elements, then the associated  matrix (referred conventionally to as a sensing matrix) can have potential application for multiplier-less dimensionality reduction. An excellent overview of Compressed Sensing and the applicability of several sensing matrices may be seen in \cite{candes_2008}.

Given the pair $(y, \Phi)$, the problem of recovering $x$ can be formulated as finding the sparsest solution (that is, the solution containing most number of zero entries) of linear system of equations $y=\Phi x$. Sparsity is measured by $\| . \|_{0}$ norm and $\|x\|_{0} = |\{j: x_j \neq 0\}|$, the number of non-zero entries in $x$. 
%\footnote{it violates the triangular inequality, that is $\| x_{1} + x_{2} \|_{0} \geq \| x_{1}\|_{0} + \|x_{2} \|_{0} $.}).
 Now finding the sparsest solution  can be formulated as the following minimization problem (generally denoted as $P_{0}$ problem):
%\begin{displaymath}
\begin{equation}
P_0: \min_{x}\|x\|_{0}  \quad \mbox{subject to} \quad \Phi x=y.
%\lebel{}
\end{equation}

%\end{displaymath} 

This $P_{0}$ problem is a combinatorial minimization problem and is known to be NP-hard \cite{bourgain_2011}. One may use greedy methods and convex relaxation of $P_{0}$ problem to recover $k-$sparse signals (that is, $\|x\|_{0}=k$ or signals with $k$ number of nonzero entries). The convex relaxation of $P_{0}$ problem can be posed as $P_{1}$ problem \cite{can_2008, Candes_2005}, which is defined as follows:
\begin{equation}
P_1:\min_{x} \|x\|_{1} \quad \mbox{subject to} \quad \Phi x=y.
%\lebel{}
\end{equation}
\par The Orthogonal Matching Pursuit (OMP) algorithm and and the $l_{1}-$norm minimization (also called basis pursuit) are two widely studied
CS reconstruction algorithms \cite{troop_2010}. 

\par Candes and Tao \cite{Tao_2006} have introduced the following isometry condition on matrices $\Phi$ and have established its important role in CS. An $m \times M$ matrix $\Phi$ is said to satisfy the Restricted Isometry Property(RIP) of order $k$ with constant $\delta_{k}$ $( 0<\delta_{k}<1)$ if for all vectors $x\in \mathbb{R}^{M}$ with $\|x\|_{0}\leq k$, we have
\begin{equation} 
\label{eqn:rip}
(1-\delta_{k}) \left\|x\right\|^{2}_{2} \leq \left\|\Phi x\right\|^{2}_{2} \leq (1+\delta_{k}) \left\|x\right\|^{2}_{2}.
\end{equation} 
%Equivalently, for all vectors $x\in \mathbb{R}^{M}$ with $\left\|x\right\|_{2} = 1$ and $\|x\|_{0}\leq k$, one may rewrite (\ref{eqn:rip}) as
%\begin{displaymath}
%\label{eqn:r}
%(1-\delta_{k})  \leq \left\|\Phi x\right\|^{2}_{2} \leq (1+\delta_{k}) .
%\end{displaymath}
The following theorem \cite{Candes_2005} establishes the equivalence between $P_{0}$ and $P_{1}$ problems: 
\begin{theorem}
Suppose an $m \times M$ matrix $\Phi$ has the $(2k, \delta)$ restricted isometry property for some $\delta < \sqrt{2}-1$, then $P_{0}$ and $P_{1}$ have same $k-$sparse solution if $P_{0}$ has a $k-$sparse solution.
\end{theorem}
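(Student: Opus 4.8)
The statement is the by-now classical $\ell_0$--$\ell_1$ equivalence under the restricted isometry property, and I would prove it by the cone-condition-plus-RIP argument of Cand\`es. Let $x_0$ be a $k$-sparse solution of $P_0$, with support $T_0$, $|T_0|\le k$, and let $x^{\#}$ be any minimizer of $P_1$. Since $x_0$ is feasible for $P_1$ (it solves $\Phi x = y$), we have $\|x^{\#}\|_1 \le \|x_0\|_1$. Set $h = x^{\#} - x_0$; then $\Phi h = 0$, and splitting the $\ell_1$ norm over $T_0$ and $T_0^{c}$ and using the triangle inequality gives the cone condition $\|h_{T_0^{c}}\|_1 \le \|h_{T_0}\|_1$. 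The goal is to deduce $h = 0$, which shows $x^{\#} = x_0$; together with the observation that RIP of order $2k$ already forces the $k$-sparse solution of $P_0$ to be unique (two such solutions would differ by a nonzero $2k$-sparse null vector of $\Phi$, contradicting \eqref{eqn:rip}), this yields the claim.

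Next I would introduce the usual ``shelling'' of $h$ off $T_0$: list the coordinates of $h$ in $T_0^{c}$ in order of decreasing magnitude and cut this list into consecutive blocks $T_1, T_2, \ldots$ of length $k$, writing $T_{01} = T_0 \cup T_1$. Since every entry of $h_{T_j}$ is bounded by the average of $|h|$ over $T_{j-1}$, one has $\|h_{T_j}\|_2 \le k^{-1/2}\|h_{T_{j-1}}\|_1$ for $j \ge 2$; summing and invoking the cone condition together with Cauchy--Schwarz,
\begin{equation*}
\sum_{j \ge 2}\|h_{T_j}\|_2 \;\le\; k^{-1/2}\|h_{T_0^{c}}\|_1 \;\le\; k^{-1/2}\|h_{T_0}\|_1 \;\le\; \|h_{T_0}\|_2 \;\le\; \|h_{T_{01}}\|_2 .
\end{equation*}

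The core step uses a near-orthogonality consequence of the RIP: for vectors $u, v$ with disjoint supports of combined cardinality at most $2k$, $|\langle \Phi u, \Phi v\rangle| \le \delta_{2k}\|u\|_2\|v\|_2$, obtained by feeding $u/\|u\|_2 \pm v/\|v\|_2$ into \eqref{eqn:rip} and using the parallelogram identity. From $\Phi h = 0$ we get $\Phi h_{T_{01}} = -\sum_{j \ge 2}\Phi h_{T_j}$; applying the RIP lower bound on the left, expanding the inner product on the right as $\langle \Phi h_{T_0}, \cdot\rangle + \langle \Phi h_{T_1}, \cdot\rangle$ (each of $T_0 \cup T_j$ and $T_1 \cup T_j$ has size $\le 2k$), and using $\|h_{T_0}\|_2 + \|h_{T_1}\|_2 \le \sqrt{2}\,\|h_{T_{01}}\|_2$ gives
\begin{equation*}
(1-\delta_{2k})\|h_{T_{01}}\|_2^{2} \;\le\; \|\Phi h_{T_{01}}\|_2^{2} \;\le\; \sqrt{2}\,\delta_{2k}\,\|h_{T_{01}}\|_2 \sum_{j \ge 2}\|h_{T_j}\|_2 \;\le\; \sqrt{2}\,\delta_{2k}\,\|h_{T_{01}}\|_2^{2}.
\end{equation*}
Hence $\bigl(1 - (1+\sqrt{2})\delta_{2k}\bigr)\|h_{T_{01}}\|_2^{2} \le 0$, and since $\delta_{2k} < \sqrt{2}-1 = \tfrac{1}{1+\sqrt{2}}$ the coefficient is strictly positive, so $h_{T_{01}} = 0$; then $\|h_{T_0^{c}}\|_1 \le \|h_{T_0}\|_1 = 0$, whence $h = 0$.

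I expect the only delicate point to be the constant bookkeeping --- arranging the shelling sum to be bounded by exactly $\|h_{T_{01}}\|_2$ and the cross term to carry the factor $\sqrt{2}$, so that the threshold comes out precisely as $\sqrt{2}-1$ --- together with being careful that the relevant isometry constant is $\delta_{2k}$ rather than $\delta_{3k}$, which is why the inner product must be split into the $T_0$ and $T_1$ pieces before each block $T_j$ is brought in. The near-orthogonality lemma and the shelling estimate are otherwise routine.
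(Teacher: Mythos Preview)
Your argument is the standard Cand\`es proof and is correct; the shelling, the cone condition, the near-orthogonality lemma, and the splitting of the cross term into the $T_0$- and $T_1$-pieces so that only $\delta_{2k}$ appears are all handled properly, and the constant $\sqrt{2}-1$ falls out exactly as you indicate.

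There is, however, nothing to compare against: the paper does not supply its own proof of this theorem. It is quoted in the preliminaries as a known result from the literature (attributed to Cand\`es--Tao and, for this particular constant, to Cand\`es's 2008 Comptes Rendus note) and is used only as background for the RIP framework. So your proposal does not differ from the paper's approach --- the paper simply has no approach here --- but it faithfully reproduces the original argument that the citation points to.
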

\par The mutual-coherence $\mu(\Phi)$ of a given matrix $\Phi$ is the largest absolute inner-product between different  normalized columns of $\Phi$, that is, $\mu(\Phi)= \max_{1\leq\; i,j \leq\; M,\; i\neq j} \frac{|\; \Phi_i ^T\Phi_j|}{\Vert \Phi_i\Vert_{2} \Vert \Phi_j \Vert_2}$. Here, $\Phi_k$ stands for the $k$-th column in $\Phi$. 
\noindent The following proposition \cite{bourgain_2011} relates the RIP constant $\delta_{k}$ and $\mu.$
\begin{proposition}
\label{thm:pro}
Suppose that $\Phi_{1},\ldots,\Phi_{M}$ are the unit norm columns of the matrix $\Phi$ with coherence $\mu$. Then $\Phi$ satisfies RIP of order $k$ with constant $\delta_{k} = (k-1)\mu$.
\end{proposition}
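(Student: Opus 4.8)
The plan is to reduce the RIP inequality to an eigenvalue estimate for the Gram matrices of $k$-column subsets of $\Phi$, and then to control those eigenvalues using the Gershgorin circle theorem.

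First I would fix a vector $x \in \mathbb{R}^M$ with $\|x\|_0 \le k$, set $S = \{j : x_j \neq 0\}$ so that $|S| \le k$, and write $\Phi_S$ for the $m \times |S|$ submatrix of $\Phi$ formed by the columns indexed by $S$ and $x_S$ for the restriction of $x$ to $S$. Then $\Phi x = \Phi_S x_S$, so $\|\Phi x\|_2^2 = x_S^T (\Phi_S^T \Phi_S) x_S$. The $|S| \times |S|$ Gram matrix $G := \Phi_S^T \Phi_S$ is symmetric; since the columns of $\Phi$ have unit norm its diagonal entries all equal $1$, and by the definition of mutual coherence each off-diagonal entry satisfies $|G_{ij}| = |\Phi_i^T \Phi_j| \le \mu$.

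Next I would apply Gershgorin's circle theorem to $G$: every eigenvalue $\lambda$ of $G$ lies in some disc centred at a diagonal entry $G_{ii} = 1$ with radius $\sum_{j \neq i} |G_{ij}| \le (|S|-1)\mu \le (k-1)\mu$. Hence every eigenvalue of $G$ satisfies $1 - (k-1)\mu \le \lambda \le 1 + (k-1)\mu$. Combining the Rayleigh quotient bounds $\lambda_{\min}(G)\|x_S\|_2^2 \le x_S^T G x_S \le \lambda_{\max}(G)\|x_S\|_2^2$ with $\|x_S\|_2 = \|x\|_2$ then gives
\[
(1-(k-1)\mu)\|x\|_2^2 \le \|\Phi x\|_2^2 \le (1+(k-1)\mu)\|x\|_2^2 ,
\]
which is exactly the RIP of order $k$ with constant $\delta_k = (k-1)\mu$, valid whenever $(k-1)\mu < 1$ so that $\delta_k \in (0,1)$ as the definition requires.

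I do not expect a serious obstacle: the argument is essentially a repackaging of Gershgorin's theorem. The only points needing mild care are handling the case $|S| < k$ (where the bound only improves, since the Gershgorin radii shrink) and observing that the statement is meaningful precisely when $\mu < 1/(k-1)$, which is the standard regime for coherence-based recovery guarantees.
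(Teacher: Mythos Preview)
Your argument is correct and is the standard proof of this well-known coherence-to-RIP bound: restrict to the Gram matrix of the active columns, note it has unit diagonal and off-diagonals bounded by $\mu$, and apply Gershgorin to trap all eigenvalues in $[1-(k-1)\mu,\,1+(k-1)\mu]$.

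There is nothing to compare against, however, because the paper does not supply its own proof of this proposition. It is stated as a known result, attributed to \cite{bourgain_2011}, and is used later only as a black box (e.g., in the one-line proof of Theorem~\ref{thm:r}). Your Gershgorin argument is exactly the classical justification one would expect; the only caveat worth flagging is the one you already noted, namely that the conclusion $\delta_k = (k-1)\mu \in (0,1)$ requires $\mu < 1/(k-1)$, which the paper implicitly assumes whenever it invokes the proposition.
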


\par One of the important problems in CS theory deals with constructing CS matrices that satisfy the RIP for the largest possible range of $k$. It is known that the widest possible range of $k$ is of the order $\frac{m}{\log(\frac{M}{m})}$, for some well-behaved order constant \cite{Richard_2008, Ronald_2007, Kashin_1978}. However the only known matrices that satisfy the RIP for this range are based on random constructions \cite{Richard_2008, Candes_2005}. Presently the researchers working in CS theory attempt issues \cite{adcock_2013, apple_2009, mixon_2014} such as 
\begin{enumerate}
\item Maximizing the sparsity of the solution, that is, for a given pair $(y,\Phi )$, finding $x$ such that  $\|x\|_0$  is as small possible
\item	Improving the recovery process, that is, incorporating the prior information (for example, partial support solution) into the recovery process
\item Developing faster algorithms
\item	Designing matrices that satisfy recovery properties
\end{enumerate}
\par The recovery algorithms available to date use matrices $\Phi$  that are generated at random and there is no efficient method for verifying whether or not a given matrix $\Phi$ does indeed have the stronger reconstruction properties. For this and other reasons, it is useful to have an explicit construction of $\Phi$.

\section{ Construction Procedure:}
%%%%%%%%%%%%%%%%%%%%%%%%%%%%%%%%%%%%%%%%%%%%%%%%%%%%%%%%%%%%%%%%%%%%%%%%%%%%%
\par In this section, we present our deterministic construction procedure taking two existing binary sensing matrices as inputs. We show that the resulting matrix has smaller density than the original two. Let $\Psi_{m \times M}$ be a binary sensing block matrix having $k''$ row blocks (the meaning of row blocks is explained later with example), each of size $n$ such that every block contains single $1$ and the overlap between any two columns is atmost $r$, that is $|\{l|\Psi_{l,p}=1\} \bigcap \{t|\Psi_{t,q}=1 \}|\leq r$ for any two columns $\Psi_{p}$ and $\Psi_{q}$. Let $\Psi'_{m' \times M'}$ be another binary sensing block matrix having $k'$ row blocks, each of size $n'$ such that each block contains single $1$ and the overlap between any two columns is atmost $r'$. Assume that $n' \leq n$ , $ k \leq \min \{ k' , k'' \}$ and $r \leq r' \leq k\leq n$. 
%Define the vector $V$, indexed on $i-1$(mod $n$), for $1 \leq i \leq nk=m$. There are $k$-blocks in $V$ and each block contains the numbers $0,1,\ldots,n-1$.
 Now a new matrix can be constructed by the following steps:

{\bf Step-1:} Let the $i^{th}$ column of $\Psi$ be  $\Psi_{i}$. For $1 \leq i \leq M$, define $S_{i} =  (((\text{supp}(\Psi_{i})-1)(\mod \;n))^{T}+ \mathbf{1} ^{T})$, where $\mathbf{1} ^{T}$ is the vector consisting of all ones of size $k'' \times 1$, supp$(\Psi_{i})$ is the support vector of $\Psi_{i}$. Since $\Psi$ has $k''$ blocks and each block contains one $1$, it follows that $|S_{i}| = k''$ where $S_{i}$ is a $k''-$ tuple on the set $X=\{1,2,\ldots,n\}$. Since $\Psi$ has $M$ columns, we have $M$ such $k''-$tuples. For example, suppose $m=9$ and $\Psi$ has 3 blocks, then each block is of size 3. Now, if the $i^{th}$ column $\Psi_{i}$ is $(1\; 0\; 0\; 0\; 1\; 0\; 0\; 0\; 1)^{T}$, supp$(\Psi_{i})$ is $[1\;5\;9]$, then the triplet, $S_{i}$, corresponding to this column  is $[1\;2\;3]$. Similarly from the matrix $\Psi'$, we can generate $M'$ number of $k'-$tuples on the set $Y = \{1,2,\ldots,n'\}.$  
%%%%%%%%%%%%%%%%%%%%%%%%%%%%%%%%%%%%%%%%%%%%%%%%%%%%%%%%%%%%%%%%%%%%%%%%%%

{\bf Step-2:} From $k''-$tuples of first matrix $\Psi$ we remove last $(k'' - k)$ entries from each tuple to make it a $k-$tuple.  Now add $(-1)$ to each of the entries of the $M$ number of $k-$tuples that are obtained from $\Psi$. Let the $k-$tuples be of the form $(c_{i1}, c_{i2}, \ldots,c_{ik})$ for $1 \leq i \leq M$. From $k'-$tuples of second matrix $\Psi'$, we remove last $(k' - k)$ entries from each tuple to make it $k-$tuple. By this process, we get $M'$ number of $k-$tuples from the second matrix $\Psi'$. Let each $k-$tuple have the form $(c'_{j1}, c'_{j2}, \ldots,c'_{jk})$ for $1 \leq j \leq M'$.

{\bf Step-3:} Now replace each $k-$tuple $(c_{i1}, c_{i2}, \ldots,c_{ik})$ with $M'$ number of $k-$tuples (obtained from $\Psi'$) by adding $n' (c_{i1}, c_{i2}, \ldots,c_{ik})$  to each of the $k-$tuples $(c'_{j1}, c'_{j2}, \ldots,c'_{jk})$ for $1 \leq j \leq M'$. This way, we get $MM'$ number of $k-$ tuples on the set $X'=\{1,2,\ldots,nn'\}$. Denote these $k-$tuples by $F$. Let $(a_{j1}, a_{j2}, \ldots,a_{jk})$ for $j= 1,2,....,MM'$ be the $k-$ tuples in $F$.

{\bf Step-4:} From these $k-$ tuples we form a binary vector of length $knn'$ where $1$ occurs in the positions $(l-1).nn'+a_{jl}$ for $l= 1,2,\dots,k$ and rest of the positions are zeros. Using these $MM'$ number of $k-$tuples, we form binary sensing matrix $\Phi$ having $k$ number of blocks and each block is of size $nn'$ and every block contains single $1$. The position of one's in each block is indexed by the $k-$ tuples. So the size of the matrix $\Phi$ becomes $nn'k \times MM'$.
The pseudo code of the proposed method is as follows: \\
%----------------------------------------------
\begin{center}
\begin{tabular}{|l l|}
\hline \\
1. & {\bf Input:}  Two matrices $\Psi$, $\Psi'$ of sizes $nk'' \times M$, $n'k' \times M'$ \\
2. & Suppose $|\text{supp}(\Psi_i) \cap \text{supp}(\Psi_j)| \leq r $ and \\
   & $ |\text{supp}(\Psi'_i) \cap \text{supp}(\Psi'_j)| \leq r'$, $\forall \ i \neq j$  \\
3. & Assume  $n' > n, k \leq \min \{k',k''\}$ and $r \leq r' \leq k \leq n$ \\
4. & Set $S_{i} =  (((\text{supp}(\Psi_{i})-1)(\mod \;n))^{T}+ \mathbf{1} ^{T})$, $|S_i| = k'', \; \forall \; i=1,2,\ldots,M$  \\
5. & Similarly, $S'_j$ with $|S'_j| = k'$ are defined for $j=1,2,\ldots,M'$ \\
6. & Set $S''_{i,j} = S'_{j,k} - \mathbf{1}+n' S_{i,k}$, $S'_{j,k}$ and $S_{i,k}$ are first $k$ entries of $S'_j$ and $S_{i}$ \\
7. & $|S''_{i,j}| = k$ and Let $S''_{i,j} = (S''_{i,j,1}, S''_{i,j,2}, \dots, S''_{i,j,k}) $ \\
   &  $1 \leq S''_{i,j,l} \leq nn', \forall l=1,2,...,k$ \\
8. & From each $S''_{i,j}$, create a vector $v_{i,j}$ of length $nn'k$ \\ 
   &  and $v_{i,j}=1$ at $(l-1).nn'+ S''_{i,j,l}$ for $l= 1,2,\dots,k$ and zero elsewheres\\  
9. & {\bf Output:} $\Phi'$, a matrix of size $nn'k \times MM'$, whose columns are $v_{i,j}.$ \\ \\
\hline
\end{tabular}
\end{center}
%%%%%%%%%%%%%%%%%%%%%%%%%%%%%%%%%%%%%%%%%%%%%%%%%%%%%%%%%%%%%%%%%%%%%%%%%%%%%%%%%%%%%

%This methodology is summarized in Fig.1
\begin{lemma}\label{lem:1}
The coherence of the enlarged matrix $\Phi$ of size $nn'k \times MM'$, $\mu(\Phi)$, is atmost $\frac{r'}{k}$.
\end{lemma}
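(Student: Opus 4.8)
The plan is to read the coherence directly off the block structure of $\Phi$. Every column of $\Phi$ carries exactly one $1$ in each of its $k$ blocks of length $nn'$, so every column has squared Euclidean norm $k$; and since the entries are $0/1$, the inner product of two columns equals the number of blocks $l\in\{1,\dots,k\}$ in which the two columns put their single $1$ in the same slot. If I write the $l$-th slot of the column indexed by the pair $(i,j)$ as $a_{(i,j)l}=n'c_{il}+c'_{jl}$ (the index produced in Step~3, denoted $S''_{i,j,l}$ in the pseudo-code), then
\[
\mu(\Phi)=\frac{1}{k}\,\max_{(i,j)\neq(i',j')}\bigl|\{\,l\le k:\ a_{(i,j)l}=a_{(i',j')l}\,\}\bigr|,
\]
and it remains only to bound that maximum by $r'$.

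The crucial step is a mixed-radix observation. Since $c'_{jl},c'_{j'l}\in\{1,\dots,n'\}$, the difference $c'_{jl}-c'_{j'l}$ has absolute value at most $n'-1$, whereas $n'(c_{il}-c_{i'l})$ is an integer multiple of $n'$; therefore $a_{(i,j)l}=a_{(i',j')l}$ can hold only if $c_{il}=c_{i'l}$ \emph{and} $c'_{jl}=c'_{j'l}$. Hence the count above equals $\bigl|\{\,l\le k:\ c_{il}=c_{i'l}\text{ and }c'_{jl}=c'_{j'l}\,\}\bigr|$. I would then translate the overlap hypotheses into this language: a support position shared by $\Psi_p$ and $\Psi_q$ and lying in block $l$ is precisely a coincidence $c_{pl}=c_{ql}$ of within-block indices (the uniform shift by $-1$ in Step~2 does not affect equalities), so $|\mathrm{supp}(\Psi_p)\cap\mathrm{supp}(\Psi_q)|\le r$ becomes $|\{l\le k'':c_{pl}=c_{ql}\}|\le r$, and likewise $|\{l\le k':c'_{pl}=c'_{ql}\}|\le r'$; since $k\le\min\{k',k''\}$ these bounds restrict to the first $k$ coordinates.

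Finally I would split on the column indices. If $i=i'$ (so necessarily $j\neq j'$) the count is $|\{l\le k:c'_{jl}=c'_{j'l}\}|\le r'$; if $j=j'$ (so $i\neq i'$) it is $|\{l\le k:c_{il}=c_{i'l}\}|\le r\le r'$; and if $i\neq i'$ and $j\neq j'$ it is at most $\min\{r,r'\}\le r'$. Thus every off-diagonal normalized inner product is at most $r'/k$, giving $\mu(\Phi)\le r'/k$. I do not anticipate a genuine difficulty: the only delicate point is the mixed-radix argument — which is exactly why the $c'$-entries are taken from $\{1,\dots,n'\}$ and the $c$-entries are scaled by $n'$ — together with the bookkeeping that identifies ``same within-block index'' with ``shared support position'', which is what allows the constants $r$ and $r'$ to carry over to the truncated $k$-tuples.
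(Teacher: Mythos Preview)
Your proof is correct and follows essentially the same route as the paper's: you reduce the coherence to counting the positions $l$ where the combined $k$-tuples agree, use the base-$n'$ (mixed-radix) observation that $n'c_{il}+c'_{jl}=n'c_{i'l}+c'_{j'l}$ forces $c_{il}=c_{i'l}$ and $c'_{jl}=c'_{j'l}$ simultaneously, and then invoke the overlap bounds $r\le r'$ for $\Psi$ and $r'$ for $\Psi'$. Your explicit three-way split on whether $i=i'$, $j=j'$, or both differ is a little more careful than the paper, which bundles all cases under ``both matrices can have at most $r'$ intersections'', but the argument is the same.
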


\begin{proof}
Let $\Phi_{i}, \Phi_{j}$ be two arbitrary columns of matrix $\Phi$. There exist two $k-$ tuples $f_i ,f_j \in F$ such that $\Phi_{i}, \Phi_{j}$ are the corresponding vectors of $k-$ tuples $f_i ,f_j$ as defined in the above construction . Suppose $f_{i}= (c'_{k'1}, c'_{k'2}, \ldots,c'_{k'k})+n'(c_{k1}, c_{k2}, \ldots,c_{kk})$ and $f_{j}= (d'_{k'1}, d'_{k'2}, \ldots,d'_{k'k})+n'(d_{k1}, d_{k2}, \ldots,d_{kk})$, where $(c'_{k'1}, c'_{k'2}, \ldots,c'_{k'k}), (d'_{k'1}, d'_{k'2}, \ldots,d'_{k'k})$ are two $k-$tuples obtained from $\Psi'$ and $(c_{k1}, c_{k2}, \ldots,c_{kk}), (d_{k1}, d_{k2}, \ldots,d_{kk})$ are $k-$tuples obtained from $\Psi$.
Set $g_{ij} = f_{i}-f_{j}= (c'_{k'1}, c'_{k'2}, \ldots,c'_{k'k})-(d'_{k'1}, d'_{k'2}, \ldots,d'_{k'k}) + n'\biggl((c_{k1}, c_{k2}, \ldots,c_{kk})-(d_{k1}, d_{k2}, \ldots,d_{kk})\biggl).$ We will show that $g_{ij}$ has atmost $r'$ number of zero elements, which implies that intersection between the supports of $\Phi_{i}, \Phi_{j}$ is atmost $r'$. Now $|c'_{k^{'}i}-d'_{k^{'}i}|<n'$ and $|c_{ki}-d_{ki}|<n$. The $l^{th}$ element of $g_{ij}$ is $(g_{ij})_{l}= (c'_{k'l}-d'_{k'l})+ n'(c_{kl}-d_{kl}).$ We investigate the cases wherein $(g_{ij})_{l} =0.$\\
\textbf{Case 1:} Suppose $c'_{k'l} \neq d'_{k'l}$. \\
\textbf{Sub case 1.1:} If $c_{kl} = d_{kl}$, then $(g_{ij})_{l} \neq 0.$\\
\textbf{Sub case 1.2:} If $c_{kl} \neq d_{kl}$, then also $(g_{ij})_{l} \neq 0.$ Since $1 \leq |c_{kl}-d_{kl}|<n$ and $|c'_{k^{'}l}-d'_{k^{'}l}|<n'$, we have $(g_{ij})_{l} = (c'_{k^{'}l}-d'_{k^{'}l})+ n'(c_{kl}-d_{kl}) \neq 0$.\\
\textbf{Case 2:} Suppose $c'_{k'l} = d'_{k'l}$ then $(g_{ij})_{l} =0$ only when $c_{kl} = d_{kl}$.
\par From the above cases, we conclude that $(g_{ij})_{l} =0$ only when $c'_{k'l} = d'_{k'l}$ and $c_{kl} = d_{kl}$, but as  both matrices $\Psi$ and $\Psi'$ can have atmost $r'$ intersections, $(g_{ij})_{l} =0$ can occur for atmost $r'-$choices of $j$. So $\Phi_{i}, \Phi_{j}$ have atmost $r'-$intersections. So the coherence $\mu(\Phi)$ of $\Phi$ is atmost $\frac{r'}{k}$.

\end{proof} 
\noindent The following theorem shows the RIP compliance of $\Phi$. 
\begin{theorem}
\label{thm:r}
The afore-constructed matrix $\Phi_{nn'k \times MM'}$ satisfies RIP with $\delta_{k'} = (k'-1)({\frac{r'}{k}})$ for any $k' < \frac{k}{r'} + 1.$    
\end{theorem}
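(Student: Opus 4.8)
The plan is to obtain Theorem~\ref{thm:r} as an immediate consequence of Lemma~\ref{lem:1} together with Proposition~\ref{thm:pro}. First I would record the structural fact about $\Phi$ that makes the passage to a coherence bound clean: by Step-4 of the construction, each column $v_{i,j}$ of $\Phi$ has exactly one nonzero entry (equal to $1$) in each of the $k$ blocks of size $nn'$, hence every column of $\Phi$ has precisely $k$ ones and Euclidean norm $\sqrt{k}$. Consequently the rescaled matrix $\frac{1}{\sqrt{k}}\Phi$ has unit-norm columns, and since scaling all columns by the same constant does not alter the inner products between normalized columns, its mutual coherence equals $\mu(\Phi)$, which by Lemma~\ref{lem:1} is at most $\frac{r'}{k}$.

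Next I would apply Proposition~\ref{thm:pro} to the unit-norm columns $\frac{1}{\sqrt{k}}\Phi_1,\dots,\frac{1}{\sqrt{k}}\Phi_{MM'}$ with sparsity parameter $k'$: it gives that $\frac{1}{\sqrt{k}}\Phi$ satisfies RIP of order $k'$ with constant $(k'-1)\mu(\Phi)$. Since RIP with a given constant is inherited by any larger constant that is still below $1$, and since $\mu(\Phi)\le \frac{r'}{k}$, one may replace $(k'-1)\mu(\Phi)$ by the stated quantity $\delta_{k'}=(k'-1)\frac{r'}{k}$, provided this quantity is admissible, i.e.\ it lies in $(0,1)$.

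Finally I would verify the admissibility of $\delta_{k'}$: the inequality $(k'-1)\frac{r'}{k}<1$ is algebraically equivalent to $k'<\frac{k}{r'}+1$, which is exactly the hypothesis of the theorem, while positivity of $\delta_{k'}$ is clear for $k'\ge 2$ (the case $k'=1$ being trivial). Hence $\Phi$, up to the harmless global normalization by $\sqrt{k}$, satisfies (\ref{eqn:rip}) of order $k'$ with constant $\delta_{k'}=(k'-1)\frac{r'}{k}$, as claimed. I do not expect any genuine obstacle here; the only point deserving a line of attention is that Proposition~\ref{thm:pro} is phrased for matrices with unit-norm columns whereas (\ref{eqn:rip}) is written for $\Phi$ itself, and the equal-norm structure of the columns of $\Phi$ makes that transition routine.
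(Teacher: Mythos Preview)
Your proposal is correct and follows exactly the same route as the paper, which simply states that the result follows from Proposition~\ref{thm:pro} and Lemma~\ref{lem:1}. You have merely spelled out the routine details (equal column norms, normalization, and the admissibility check $(k'-1)\frac{r'}{k}<1 \Leftrightarrow k'<\frac{k}{r'}+1$) that the paper leaves implicit.
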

\begin{proof}
Proof follows from the Proposition \ref{thm:pro} and Lemma \ref{lem:1}. 
\end{proof}
\noindent \textbf{Remark-1:} The density of the matrix $\Phi$ is $\frac{1}{nn'}$, which is smaller than the densities of the matrices $\Psi$ and $\Psi'$, which are $\frac{1}{n}$ and $\frac{1}{n'}$ respectively. Hence using our construction procedure we can construct more sparse binary matrices from existing binary sensing matrices.\\
\noindent \textbf{Remark-2:} The positions of $1'$s in the matrix $\Phi$ are known and we know that every block contains single one. So we can permute the columns of the matrix $\Phi$ so that the support of the permuted matrix has nice structured sparsity.  
\subsection{Example:} In this section, we demonstrate our proposed method of construction via an example.
\noindent Let $\Psi_{4 \times 4}$ be a binary sensing block matrix

%$\vspace{10pt}$$\begin{pmatrix}
%1 & 0 & 1 & 0 \\
%0 & 1 & 0 & 1 \\
%\hline
%1 & 0 & 0 & 1 \\
%0 & 1 & 1 & 0 \
%
%\end{pmatrix}$

\[ \left( \begin{array}{cccc}                       
     1 & 0 & 1 & 0 \\
     0 & 1 & 0 & 1 \\
		\hline
     1 & 0 & 0 & 1 \\
     0 & 1 & 1 & 0 \end{array} \right)\] 
having $2$ row blocks, each of atmost size $2$ such that every block contains one $1$ and the overlap between any two columns is $1$. Let $\Psi'_{9 \times 9}$ be a binary sensing block matrix 
\[ \left( \begin{array}{ccccccccc}

     1 & 0 & 0 & 1 & 0 & 0 & 1 & 0 & 0 \\
     0 & 1 & 0 & 0 & 1 & 0 & 0 & 1 & 0 \\
     0 & 0 & 1 & 0 & 0 & 1 & 0 & 0 & 1 \\
		\hline
     1 & 0 & 0 & 0 & 0 & 1 & 0 & 1 & 0 \\
     0 & 1 & 0 & 1 & 0 & 0 & 0 & 0 & 1  \\
     0 & 0 & 1 & 0 & 1 & 0 & 1 & 0 & 0  \\
		\hline
     1 & 0 & 0 & 0 & 1 & 0 & 0 & 0 & 1  \\
     0 & 1 & 0 & 0 & 0 & 1 & 1 & 0 & 0  \\
     0 & 0 & 1 & 1 & 0 & 0 & 0 & 1 & 0 \end{array} \right)\]
having $3$ row blocks, each of atmost size $3$ such that every block contains one $1$ and the overlap between any two columns is $1$.

{\bf Step-1:} The corresponding four $2-$tuples obtained from $\Psi$ constitute $S$ as
 
$S=\Bigg\{ \begin{pmatrix}
1\\
1
\end{pmatrix},\begin{pmatrix}
2\\
2
\end{pmatrix},\begin{pmatrix}
1\\
2
\end{pmatrix},\begin{pmatrix}
2\\
1
\end{pmatrix} \Bigg\}$ $\mathbf{.}$

%
%\[ \left( \begin{array}{cccc}
      %
			%2 & 1 & 2 & 1 \\
			%2 & 1 & 1 & 2 \end{array} \right)\]
			
 The corresponding nine $3-$tuples obtained from $\Psi'$ constitute $S'$ as

$S'=\Bigg\{\begin{pmatrix}
1\\
1\\
1\end{pmatrix},
\begin{pmatrix}
2\\
2\\
2\end{pmatrix},\begin{pmatrix}
3\\
3\\
3\end{pmatrix},\begin{pmatrix}
1\\
2\\
3\end{pmatrix},\begin{pmatrix}
2\\
3\\
1\end{pmatrix},\begin{pmatrix}
3\\
1\\
2\end{pmatrix},\begin{pmatrix}
1\\
3\\
2\end{pmatrix},\begin{pmatrix}
2\\
1\\
3\end{pmatrix},\begin{pmatrix}
3\\
2\\
1\end{pmatrix}\Bigg\}$ $\mathbf{.}$

	%
	%\[ \left( \begin{array}{ccccccccc}
%
      %2 & 3 & 1 & 2 & 3 & 1 & 2 & 3 & 1 \\
			%2 & 3 & 1 & 3 & 1 & 2 & 1 & 2 & 3 \\
			%2 & 3 & 1 & 1 & 1 & 3 & 3 & 1 & 2	\end{array} \right)\]

{\bf Step-2:} Now removing the last entries from each $3-$tuple of $S'$, we get $S''$ that contains nine $2-$tuples as

$S''=\Bigg\{\begin{pmatrix}
1\\
1\end{pmatrix},\begin{pmatrix}
2\\
2\end{pmatrix},\begin{pmatrix}
3\\
3\end{pmatrix},\begin{pmatrix}
1\\
2\end{pmatrix},\begin{pmatrix}
2\\
3\end{pmatrix},\begin{pmatrix}
3\\
1\end{pmatrix},\begin{pmatrix}
1\\
3\end{pmatrix},\begin{pmatrix}
2\\
1\end{pmatrix},\begin{pmatrix}
3\\
2\end{pmatrix}\Bigg\}$ $\mathbf{.}$

	%\[ \left( \begin{array}{ccccccccc}
%
      %2 & 3 & 1 & 2 & 3 & 1 & 2 & 3 & 1 \\
			%2 & 3 & 1 & 3 & 1 & 2 & 1 & 2 & 3  \end{array} \right)\]

{\bf Step-3:} Let us add $(\mathbf{-1})$ to each of the entries of $S$ and then multiply each of the tuples by $3$. Now to each $2-$tuple add all $2-$tuples of $S''$ and generate $S'''$ with $|S'''|=36$. Now all the entrices of $S'''$ are less than or equal to $6$. All the tuples of $S'''$ are as follows:

\tiny{$\Bigg\{\begin{pmatrix}
1\\
1\end{pmatrix}$, $\begin{pmatrix}
4\\
4\end{pmatrix}$, $\begin{pmatrix}
1\\
4\end{pmatrix}$, $\begin{pmatrix}
4\\
1\end{pmatrix}$, $\begin{pmatrix}
2\\
2\end{pmatrix}$, $\begin{pmatrix}
5\\
5\end{pmatrix}$, $\begin{pmatrix}
2\\
5\end{pmatrix}$, $\begin{pmatrix}
5\\
2\end{pmatrix}$, $\begin{pmatrix}
3\\
3\end{pmatrix}$, $\begin{pmatrix}
6\\
6\end{pmatrix}$, $\begin{pmatrix}
3\\
6\end{pmatrix}$, $\begin{pmatrix}
6\\
3\end{pmatrix}$, $\begin{pmatrix}
1\\
2\end{pmatrix}$, $\begin{pmatrix}
4\\
5\end{pmatrix}$, $\begin{pmatrix}
1\\
5\end{pmatrix}$, $\begin{pmatrix}
4\\
2\end{pmatrix}$, $\begin{pmatrix}
2\\
3\end{pmatrix}$, $\begin{pmatrix}
5\\
6\end{pmatrix}$, $\begin{pmatrix}
2\\
6\end{pmatrix}$, $\begin{pmatrix}
5\\
3\end{pmatrix}$, $\begin{pmatrix}
3\\
1\end{pmatrix}$, $\begin{pmatrix}
6\\
4\end{pmatrix}$, $\begin{pmatrix}
3\\
4\end{pmatrix}$, $\begin{pmatrix}
6\\
1\end{pmatrix}$, $\begin{pmatrix}
1\\
3\end{pmatrix}$, $\begin{pmatrix}
4\\
6\end{pmatrix}$, $\begin{pmatrix}
1\\
6\end{pmatrix}$, $\begin{pmatrix}
4\\
3\end{pmatrix}$, $\begin{pmatrix}
2\\
1\end{pmatrix}$, $\begin{pmatrix}
5\\
4\end{pmatrix}$, $\begin{pmatrix}
2\\
4\end{pmatrix}$, $\begin{pmatrix}
5\\
1\end{pmatrix}$, $\begin{pmatrix}
3\\
2\end{pmatrix}$, $\begin{pmatrix}
6\\
5\end{pmatrix}$, $\begin{pmatrix}
3\\
5\end{pmatrix}$, $\begin{pmatrix}
6\\
2\end{pmatrix}\Bigg\}$} $\mathbf{.}$

%\tiny{
 %\[ \left( \begin{array}{*{36}c}
%%\begin{pmatrix} 
%5 & 6 & 4 & 5 & 6 & 4 & 5 & 6 & 4 & 2 & 3 & 1 & 2 & 3 & 1 & 2 & 3 & 1 & 5 & 6 & 4 & 5 & 6 & 4 & 5 & 6 & 4 & 2 & 3 & 1 & 2 & 3 & 1 & 2 & 3 & 1 \\
%5 & 6 & 4 & 6 & 4 & 5 & 4 & 5 & 6 & 2 & 3 & 1 & 3 & 1 & 2 & 1 & 2 & 3 & 2 & 3 & 1 & 3 & 1 & 2 & 1 & 2 & 3 & 5 & 6 & 4 & 6 & 4 & 5 & 4 & 5 & 6 
%%\end{pmatrix}
%\end{array} \right)\]}
\normalsize
 			
{\bf Step-4:} From each $2-$tuple of $S'''$, we form a column consisting of two blocks of size $6$ where the positions of $1's$ in each block are dictated by the two entries of the tuple. Finally, from all these steps, we obtain the following matrix 
$\Phi_{12 \times 36} :$ 		

\tiny{\[ \left( \begin{array}{*{36}c}
1&0&1&0&0&0&0&0&0&0&0&0&1&0&1&0&0&0&0&0&0&0&0&0&1&0&1&0&0&0&0&0&0&0&0&0\\
0&0&0&0&1&0&1&0&0&0&0&0&0&0&0&0&1&0&1&0&0&0&0&0&0&0&0&0&1&0&1&0&0&0&0&0\\
0&0&0&0&0&0&0&0&1&0&1&0&0&0&0&0&0&0&0&0&1&0&1&0&0&0&0&0&0&0&0&0&1&0&1&0\\
0&1&0&1&0&0&0&0&0&0&0&0&1&0&1&0&0&0&0&0&0&0&0&0&0&1&0&1&0&0&0&0&0&0&0&0\\
0&0&0&0&0&1&0&1&0&0&0&0&0&0&0&0&0&1&0&1&0&0&0&0&0&0&0&0&0&1&0&1&0&0&0&0\\
0&0&0&0&0&0&0&0&0&1&0&1&0&0&0&0&0&0&0&0&0&1&0&1&0&0&0&0&0&0&0&0&0&1&0&1\\
\hline
1&0&0&1&0&0&0&0&0&0&0&0&0&0&0&0&0&0&0&0&1&0&0&1&0&0&0&0&1&0&0&1&0&0&0&0\\
0&0&0&0&1&0&0&1&0&0&0&0&1&0&0&1&0&0&0&0&0&0&0&0&0&0&0&0&0&0&0&0&1&0&0&1\\
0&0&0&0&0&0&0&0&1&0&0&1&0&0&0&0&1&0&0&1&0&0&0&0&1&0&0&1&0&0&0&0&0&0&0&0\\
0&1&1&0&0&0&0&0&0&0&0&0&0&0&0&0&0&0&0&0&0&1&1&0&0&0&0&0&0&1&1&0&0&0&0&0\\
0&0&0&0&0&1&1&0&0&0&0&0&0&1&1&0&0&0&0&0&0&0&0&0&0&0&0&0&0&0&0&0&0&1&1&0\\
0&0&0&0&0&0&0&0&0&1&1&0&0&0&0&0&0&1&1&0&0&0&0&0&0&1&1&0&0&0&0&0&0&0&0&0\\
\end{array} \right)\]}

	\normalsize

\subsection{Example}
Let $p_{1}, _{2}$ be two distinct primes and choose a positive integer $r$, such that $r<p_{2}<p_{1}$. Using the method in \cite{Ronald_2007}, one obtains binary matrices $\Psi, \Psi'$ of sizes $p^{2}_{2} \times p^{r+1}_{2}$ and $p^{2}_{1} \times p^{r+1}_{1}$ respectively. Thus we get $p^{r+1}_{1}$ number of $p_{1}-$tuples and $p^{r+1}_{2}$ number of $p_{2}-$tuples. If we apply our construction procedure on these matrices then we generate matrix $\Phi$ of size $p_{1}p^{2}_{2} \times (p_{1}p_{2})^{r+1}$ with coherence $\frac{r}{p_{2}}$. The density of this matrix $\Phi$ is $\frac{1}{p_{1}p_{2}}$, which is small compared to $\frac{1}{p_{1}}$ and $\frac{1}{p_{2}}$, the densities of $\Psi$ and $\Psi'$ respectively.
\par In the construction of $\Phi$, the row size is $p_{1}p^{2}_{2}$ and each column contains $p_{2}$ number of ones, and the overlap between any two columns is $r$, therefore the maximum possible column size for this construction \cite{amini_2011} is $\frac{{p_{1}p^{2}_{2}\choose r+1}}{{p_{2} \choose r+1}}$, which is of the order $(p_{1}p_{2})^{r+1}$, that is, $\frac{{p_{1}p^{2}_{2}\choose r+1}}{{p_{2} \choose r+1}} = O((p_{1}p_{2})^{r+1})$. Consequently, our enlarged matrix attains the maximum possible column size asymptotically. 
 
%\textbf{Remark-2:} In our previous work \cite{ram_2013}, using Euler Squares, we have constructed binary sensing matrices of any row size $m$ except for $ p, p^2$ (for prime $p$) with asymptotically optimal column size such that overlap between any two columns is $1$. In the present work, however, we construct sparse binary sensing matrices of any row size $m$ except for $p, p^2, pq$ for distinct primes $p,q$ such that overlap between any two columns is $r' \geq 1$ and these matrices too possess optimal possible column size asymptotically. Because of $r'>1$, the matrix so constructed here has better column-to-row 
%ratio than the one in \cite{ram_2013}. This has greater relevance to applications like dimensionality reduction \cite{Elad_2010}.  
%%%%%%%%%%%%%%%%%%%%%%%%%%%%%%%%%%%%%%%%%%%%%%%%%%%%%%%%%%%%%%%%%%%
\section{An application of proposed construction procedure:}
%%%%%%%%%%%%%%%%%%%%%%%%%%%%%%%%%%%%%%%%%%%%%%%%%%%%%%%%%%%%%%%%%%%%
\ As stated already, the present work helps us to construct deterministic CS matrices of lower density. In addition, we show that this methodology could be adopted to constructing CS matrices of general row sizes.

%We can construct binary matrices of row size different from $p, p^2, pq$ (for distinct primes $p,q$), by our construction procedure with the help of existing binary matrices.

\begin{theorem}
Suppose $m$ is any positive integer different from $p, p^2, pq$ for distinct primes $p,q$. Then there exists a binary CS matrix of row size $m$.
 
%If $m=p^{r_{1}}_{1}p^{r_{2}}_{2}\ldots p^{r_{l}}_{l}$ such that $r_{i}\geq 2$ for $1 \leq i \leq l$, $r_{i} > 2$ for $i=1$ then there exist a matrix of row size $m$.
\end{theorem}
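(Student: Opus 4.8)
The plan is to obtain $\Phi$ by \emph{iterating} the construction procedure of Section~3, using DeVore's matrices \cite{Ronald_2007} as the atomic inputs. Setting aside the degenerate value $m=1$, the hypothesis that $m$ is none of $p$, $p^{2}$, $pq$ is equivalent to saying that $m$ has at least three prime factors counted with multiplicity; write $m=q_{1}q_{2}\cdots q_{s}$ with $s\ge 3$ and primes $q_{1}\le q_{2}\le\cdots\le q_{s}$. For a prime $q$, DeVore's construction with polynomials of degree at most $1$ supplies a $q^{2}\times q^{2}$ binary sensing block matrix $D_{q}$ with $q$ blocks of size $q$, exactly one $1$ per block, and pairwise column overlap at most $1$ --- precisely the data that the construction of Section~3 takes as input.

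First I would fix $k:=q_{1}$ once and for all and form $\Phi_{1}$ by applying the construction to $\Psi=D_{q_{2}}$ and $\Psi'=D_{q_{3}}$. The hypotheses hold, since $k=q_{1}\le q_{2}=\min\{k',k''\}$ and $r=r'=1\le q_{1}=k\le q_{2}=n$; by Step~4 and Lemma~\ref{lem:1}, $\Phi_{1}$ is again a binary sensing block matrix, now with exactly $q_{1}$ blocks of size $q_{2}q_{3}$, one $1$ per block, overlap at most $1$, and row size $q_{1}q_{2}q_{3}$. Then, for $i=2,\dots,s-2$ in turn, I would feed the previous matrix back in: apply the construction to $\Psi=\Phi_{i-1}$ and $\Psi'=D_{q_{i+2}}$ with the \emph{same} value $k=q_{1}$. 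The hypotheses persist at each stage: $\Phi_{i-1}$ has exactly $q_{1}$ blocks while $D_{q_{i+2}}$ has $q_{i+2}\ge q_{1}$ blocks, so $k=q_{1}\le\min\{k',k''\}$; both inputs have overlap at most $1$, so $r=r'=1\le q_{1}=k$; and $k=q_{1}$ does not exceed the block size $q_{2}\cdots q_{i+1}$ of $\Phi_{i-1}$. The resulting $\Phi_{i}$ has $q_{1}$ blocks of size $q_{2}\cdots q_{i+2}$, overlap at most $1$, and row size $(q_{2}\cdots q_{i+1})\cdot q_{i+2}\cdot q_{1}=q_{1}q_{2}\cdots q_{i+2}$.

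After $s-2$ steps (just one when $m$ has exactly three prime factors), $\Phi_{s-2}$ has row size $q_{1}q_{2}\cdots q_{s}=m$, and by Lemma~\ref{lem:1} its coherence is at most $r'/k=1/q_{1}<1$, whence by Proposition~\ref{thm:pro} it satisfies the RIP of order $2$ (indeed of order $q_{1}$) and is a binary CS matrix of row size $m$. The step I expect to be the crux is noticing that a \emph{single} application of the construction cannot reach every eligible $m$ --- for instance, for $m=2\cdot 3\cdot 5\cdot 7$ no admissible triple $(n,n',k)$ with $n,n'$ prime powers and $2\le k\le\min\{n,n'\}$ multiplies to $m$ --- so the proof must be organised as an iteration; the device that makes it go through is pinning $k$ to the \emph{smallest} prime factor $q_{1}$, so that every intermediate matrix has exactly $q_{1}$ blocks and the constraint $k\le\min\{k',k''\}$ is automatically met against each further factor $D_{q}$ (which has $q\ge q_{1}$ blocks). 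A minor point to record is that Lemma~\ref{lem:1} is applied here with the two block sizes not assumed comparable; this is harmless, since its proof uses only $|c_{kl}-d_{kl}|<n$ and $|c'_{k'l}-d'_{k'l}|<n'$.
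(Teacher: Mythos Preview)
Your proposal is correct and follows essentially the same strategy as the paper: iterate the construction of Section~3 with DeVore matrices as building blocks. The organization differs, however, and yours is arguably cleaner. The paper splits into three cases according to the number of \emph{distinct} prime factors of $m$ (prime powers $p^{i}$ with $i>2$; products $p_{1}^{r_{1}}p_{2}^{r_{2}}$ with both exponents $\ge 2$; products of $s\ge 3$ distinct prime powers, handled by induction on $s$), and at each stage feeds in DeVore-type matrices indexed by the prime powers $p_{j}^{r_{j}}$. You instead factor $m$ into primes counted with multiplicity, so that the hypothesis becomes simply $s\ge 3$, and run a single uniform iteration using only prime-indexed DeVore matrices $D_{q}$, fixing $k=q_{1}$ once and for all. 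Pinning $k$ to the smallest prime is the device that makes this go through without bookkeeping: both the previous output (which has exactly $q_{1}$ blocks by construction) and each new $D_{q_{i}}$ (which has $q_{i}\ge q_{1}$ blocks) automatically satisfy $k\le\min\{k',k''\}$. Your version thus avoids the case split entirely, and as a side effect it also covers numbers such as $m=p_{1}p_{2}^{2}$ (two distinct primes, one with exponent~$1$) that fall between the paper's Cases~2 and~3 as literally stated. Your closing remark that the proof of Lemma~\ref{lem:1} does not actually use the standing assumption $n'\le n$ is correct and worth recording, since your iteration does not preserve any ordering between the two block sizes.
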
 
\begin{proof}
%Without loss of generality assume that $2^{r} \leq p^{r_{1}}_{1} \leq p^{r_{2}}_{2} \leq \ldots \leq p^{r_{l}}_{l}$.\\
\textbf{Case 1:} 
If $m = p^{i}, i>2$, then $m$ can be written as $m = p^{i-2}p\cdot p$  (let take $p=k\leq n'=p\leq p^{i-2}=n$). Suppose $\Psi$ is a matrix having $p'(>p)-$blocks where each block is of size $p^{i-2}$ with intersection between any two different columns is $r$. Let $\Psi'$ be a CS matrix having $p$ blocks where each block is of size $p$  with intersection between any two different columns is $r'$(It is guaranteed in \cite{Ronald_2007} that such matrices exist and in above Example 3.2). By applying our construction procedure on these two matrices, we generate matrix $\Phi$ of row size $p^{i}$.\\ %(if we work with the matrices constructed in \cite{Ronald_2007} and $r''=\max\{r,r'\}$, then the coherence of the resulting matrix $\Phi$ constructed by this choice of construction is $\frac{r''}{p}$ and column size is $p^{r(i-2)+r'+2}$)\\ 
\textbf{Case 2:} Suppose $m=p^{r_{1}}_{1}p^{r_{2}}_{2}$ such that both $r_{1} \neq 1 \neq r_{2}$ and $p^{r_{1}}_{1}>p^{r_{2}}_{2}$. Take a CS matrix $\Psi$ having $p_{1}-$ blocks with each block being of size $p_{1}^{r_{1}-1}$. Let $\Psi'$ be another CS matrix having $p_{1}-$blocks where each block is of size $p_{2}^{r_{2}}$ (existence of such matrices is guaranteed in \cite{Ronald_2007}). By applying our construction procedure on these two matrices, we generate matrix $\Phi$ of row size $m$.\\ 
\textbf{Case 3:} Suppose $m=p^{r_{1}}_{1}p^{r_{2}}_{2}\cdots p^{r_{s}}_{s}$ such that $ s\geq3$ and $ p^{r_{1}}_{1}>p^{r_{2}}_{2}> \cdots >p^{r_{s}}_{s}$. By induction on $s$, we show that there exists a CS matrix of row size $m$. For $s=3$, set $m_3=p^{r_{1}}_{1}p^{r_{2}}_{2}p^{r_{3}}_{3}$. Suppose the CS matrix $\Psi$ has $p^{r_{3}}_{3}-$blocks where each block is of size $p_{1}^{r_{1}}$. Let $\Psi'$ be another CS matrix having $p^{r_{3}}_{3}-$blocks such that each block is of size $p_{2}^{r_{2}}$ (existence of such matrices is guaranteed in \cite{Ronald_2007}). If we apply our construction procedure on these two matrices, then we generate matrix $\Phi$ of row size $p^{r_{1}}_{1}p^{r_{2}}_{2}p^{r_{3}}_{3}$, which is $m_3$. Assume that for $3<l<k$, there exists a matrix of row size $m_l=p^{r_{1}}_{1}p^{r_{2}}_{2}\cdots p^{r_{l}}_{l}$. Now we have to show that there exists a CS matrix of row size $m_k=p^{r_{1}}_{1}p^{r_{2}}_{2}\cdots p^{r_{k-2}}_{k-2}p^{r_{k-1}}_{k-1}p^{r_{k}}_{k}$. Take a CS matrix $\Psi$ having $p^{r_{k}}_{k}-$blocks such that each block is of size $m_{k-2}=p^{r_{1}}_{1}p^{r_{2}}_{2}\cdots p^{r_{k-2}}_{k-2}$. Suppose $\Psi'$ has $p^{r_{k}}_{k}-$blocks such that each block is of size $p_{k-1}^{r_{k-1}}$. By applying our construction procedure on these two matrices, we generate matrix $\Phi$ whose row size is $p^{r_{1}}_{1}p^{r_{2}}_{2}\cdots p^{r_{k-2}}_{k-2}p^{r_{k-1}}_{k-1}p^{r_{k}}_{k}$, which is $m$. By induction we conclude that there exists a CS matrix of row size $m=p^{r_{1}}_{1}p^{r_{2}}_{2}\cdots p^{r_{s}}_{s}.$
\end{proof}

\noindent \textbf{Remark-3:} In the process of obtaining a binary matrix of row size $m$ if we use matrices $\Psi_{1},\ldots,\Psi_{t}$ with column sizes $M_{1},\ldots,M_{t}$ respectively and with coherence $\mu_{1},\ldots, \mu_{t}$ respectively, then the resulting matrix $\Phi$ is of size $m \times (M_{1}\cdots M_{t})$ and the coherence $\mu_{\Phi}$ of $\Phi$ is $\text{max}{\{\mu_{1},\ldots, \mu_{t}\}}.$

\noindent \textbf{Remark-4:}
The authors of \cite{amni_2012} obtained general row size via Kronecker Delta product. If we apply Kronecker Delta product on $\Psi_{m \times M}$ and $\Psi'_{m' \times M'}$, the existing matrices that we started with, one obtains $\Psi''$, which is of the size $mm' \times MM'$. The method proposed herein, on the other-hand, results in $\Phi$ of size $nn'k \times MM'$. Although both the matrices have same coherence and same density, as $nn'k<mm'$, $\Phi$ has better aspect ratio or redundancy factor (column to row ratio) compared to $\Psi'$.\

\noindent \textbf{Remark-5:}
 As coherence of the resulting matrix depends on the choice of the two matrices, we can choose the two matrices in such a way that the resultant matrix has low coherence. In the above proof, we have used as an example the matrices constructed in \cite{Ronald_2007}. Nevertheless, one may use other suitable binary matrices like those constructed in \cite{li_2012}. \

\section{Construction of ternary matrices}
%%%%%%%%%%%%%%%%%%%%%%%%%%%%%%%%%%%%%%%%%%%%%%%%%%%%%%%
Let $\Phi_{m\times M}$ be a matrix having $k$ number of blocks where each block is of size $n$, containing single one's and intersection between any two different columns of $\Phi$ is atmost $r$. Let the $i^{th}$ column of $\Phi$ be  $\Phi_{i}$. For $1 \leq i \leq M$, define $f_{i} =  (((\text{supp}(\Phi_{i})-1)(\mod n))^{\bf{T}}+\bf{1}^{T})$, 
where $\bf{1}^{T}$ is the vector consisting of all ones of size $ k \times 1$, supp$(\Phi_{i})$ is the support vector of $\Phi_{i}$. Since $\Phi$ has $k$ blocks and each block contains one $1$, it follows that $|f_{i}| = k$ where $f_{i}$ is a $k-$ tuple on the set $X=\{1,2,\ldots,n\}$. Since $\Phi$ has $M$ columns, we have $M$ such $k-$tuples. Let $f_{i}=(f_{i_1}, f_{i_2}, \ldots,f_{i_k})$ be the $k-$ tuples. Now decompose this $k-$tuple into $k-$number of two tuples as $(l,f_{i_{l}})$, where $l$ denotes the $l^{th}$ block position and $f_{i_{l}}$ denotes the $l^{th}$ entry of $f_{i}$. It is to be noted here that $1 \leq l \leq k, 1 \leq f_{i_{l}} \leq n$.
%%%%%%%%%%%%%%%%%%%%%%%%%%%%%%%%%%%%%%%%%%%%%%%%%%%%
\subsection{Construction}
%%%%%%%%%%%%%%%%%%%%%%%%%%%%%%%%%%%%%%%%%%%%%%%%%
 From the $k-$ tuple $f_{i}=(f_{i_1}, f_{i_2}, \ldots,f_{i_k}),$ we form a binary vector of length $n \times k$ where $1$ occurrs in the positions $(l-1).n+f_{i_l}$ for $l= 1,2,....,k$ and rest of the positions are zeros (the vector is nothing but $\Phi_{i}$ )  and then we replace $1$ with $(-1)$ if $l>f_{i_{l}}$, which results in a vector consisting of $0,1,-1$ as entries. As we have $M$ such $k-$tuples, we get $M$ number of ternary vectors of length $n \times k$. This way, we get a ternary matrix $\Phi'_{m\times M}$ of size same as that of $\Phi_{m\times M}$. 
%%%%%%%%%%%%%%%%%%%%%%%%%%%%%%%%%%%%%%%%%%%%%%%%%%%%%%%%%%%%
\par Let $(0\; 1\; 0\; 0\; 0\; 1\; 1\; 0\; 0)^{T}$ be a column from a matrix having $3$ blocks and each block is of size $3$. Then the support of this column is the $3-$ tuple $(2\; 3\; 1)$. Break this $3$- tuple into three $2$- tuples $(1\;2), (2\;3), (3\;1)$ as we mentioned above.  Now in the 3rd block as $1$ occurs in the 1st position, we replace $1$ with $-1$ in the 3rd block of the binary vector  $(0\; 1\; 0\; 0\; 0\; 1\; 1\; 0\; 0)^{T}$ to generate the ternary vector $(0\; 1\; 0\; 0\; 0\; 1\; (-1)\; 0\; 0)^{T}$.  

\noindent \textbf{Remark-6:} The coherence and density of the ternary matrix $\Phi'$ remain same as that of the binary matrix $\Phi$. \\
\par As in the case of binary CS matrices, it is possible to construct ternary matrices of general row size as well, which is concluded as the following theorem:
\begin{theorem}
Suppose $m$ is any positive integer different from $p, p^2, pq$ for distinct primes $p,q$. Then there exists a ternary CS matrix of row size $m$. 
\end{theorem}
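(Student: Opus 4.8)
The plan is to reduce the statement about ternary matrices to the one already proved for binary matrices (the theorem in Section~4), and then invoke the ternary construction of Section~5 together with Remark~6. Concretely, let $m$ be any positive integer different from $p$, $p^2$, and $pq$ for distinct primes $p,q$. By the earlier theorem, there exists a binary CS matrix $\Phi_{m\times M}$ of row size $m$; moreover, by inspecting the three cases of that theorem's proof, the matrix $\Phi$ produced is always of the block form required by Section~5, namely it has $k$ blocks, each of size $n$ with $nk=m$, every block contains a single $1$, and the column overlap is bounded by some $r$. Hence $\Phi$ is a legitimate input to the ternary construction.

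Next I would apply the construction of Section~5.1 to this $\Phi$: decompose each column's support $k$-tuple $f_i=(f_{i_1},\dots,f_{i_k})$ into the two-tuples $(l,f_{i_l})$, and flip the sign of the $1$ sitting in block $l$ whenever $l>f_{i_l}$. This yields a ternary matrix $\Phi'_{m\times M}$ with entries in $\{0,1,-1\}$ and the same row size $m$. By Remark~6, the coherence and density of $\Phi'$ coincide with those of $\Phi$; in particular $\mu(\Phi')=\mu(\Phi)\le r/k$, which is strictly less than $1$. Then, invoking Proposition~\ref{thm:pro}, $\Phi'$ satisfies RIP of order $k'$ with constant $\delta_{k'}=(k'-1)\mu(\Phi')$ for every $k'<1/\mu(\Phi')+1$, so $\Phi'$ is a bona fide ternary CS matrix of row size $m$. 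This completes the argument.

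I do not expect a genuine obstacle here, since the theorem is essentially a corollary of the Section~4 theorem plus the sign-flipping recipe of Section~5. The only point that needs a little care — and the step I would treat as the ``main'' one — is verifying that the binary matrix handed to us by the Section~4 construction genuinely has the single-$1$-per-block structure demanded in Section~5, rather than merely being ``a binary CS matrix.'' This is true because every matrix produced along the way (both the DeVore-type building blocks of Example~3.2 and the enlarged matrix $\Phi$ of Step~4 in Section~3) is explicitly built to have exactly one $1$ in each of its blocks; I would state this observation once and then the rest follows mechanically. One could also remark that the coherence of $\Phi'$ equals $\max\{\mu_1,\dots,\mu_t\}$ over the matrices used in the recursive construction, exactly as in Remark~3, so no loss in recovery guarantees is incurred in passing from the binary to the ternary case.
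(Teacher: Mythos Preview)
Your proposal is correct and follows essentially the same approach as the paper: construct the binary CS matrix of row size $m$ via the Section~4 theorem, then apply the sign-flipping construction of Section~5.1 (with Remark~6) to obtain the ternary matrix. Your extra care in verifying that the Section~4 output has the required single-$1$-per-block structure, and your explicit invocation of Proposition~\ref{thm:pro} for the RIP guarantee, are refinements the paper leaves implicit, but the underlying argument is the same.
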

\begin{proof} 
Initially we construct a binary matrix of row size $m$, which is discussed in Section 4.  Then we use methodology in section 5.1  to obtain a ternary matrix of row size $m$.
\end{proof}
\subsection{A different approach to constructing ternary matrix:}
In this section, we present another deterministic construction procedure of ternary matrix by combining binary and Hadamard matrices . We show that the resulting matrix has same density and coherence as binary but with better aspect ratio (column to row ratio). Let $\Psi_{m \times M}$ be a binary sensing  matrix having $k$ number of one's in each column and overlap between any two columns is $r$. Suppose there exists a Hadamard matrix $H$ of size $(k+r')$ for some $r' \in { \{0,\dots, r\}}$. For each column of $\Psi$, we replace each of its $1$-valued entries with a distinct row of $H$ to obtain a ternary CS matrix $\Phi$ of size $m \times M(k+r')$. As the rows of Hadamard matrix $H$ are orthogonal, the rows of $\Phi$ are orthogonal. From the construction methodology, it is easy to check that the coherence of the matrix $\Phi$ is $\frac{r}{k}$. The density of the matrix $\Phi$ is $\frac{k}{m}$.

\section{Concluding Remarks:}
As CS matrices of low density (or high sparsity) result in algorithms with low computational complexity, the present work has constructed sparse CS matrices from the existing two binary sensing matrices. As an application of proposed methodology, we have shown that binary and ternary CS matrices for a more general set of numbers can be constructed.

\section{\bf Acknowledgments}
The first author is thankful for the support (Ref No. 19-06/2011(i)EU-IV) that he receives from UGC, Govt of India. The second author gratefully acknowledges the support (Ref No. 20-6/2009(i)EU-IV) that he receives from UGC, Govt of India. The third author is thankful to DST (SR/FTP/ETA-054/2009) for the partial support that he received. We thank Mr. Roopak R Tamboli for helping us in simulation work.

\end{document}